\date{}
\newtheorem{theorem}{Teorema} 
\newtheorem{proposition}{Proprieta'}
\newtheorem{definition}{Definizione}
\newtheorem{notation}{Nota}
\newtheorem{ex}{Esercizio} 
\newtheorem{esempio}{Esempio}
\newcommand{\beq}{\begin{equation}} 
\newcommand{\eeq}{\end{equation}}
\newcommand{\bex}{\begin{ex}} 
\newcommand{\eex}{\end{ex}} 
\newcommand{\bese}{\begin{esempio}} 
\newcommand{\eese}{\end{esempio}} 
\newcommand{\bpro}{\begin{proposition}} 
\newcommand{\epro}{\end{proposition}}
\newcommand{\bthe}{\begin{theorem}} 
\newcommand{\ethe}{\end{theorem}}
\newcommand{\bnote}{\begin{notation}} 
\newcommand{\enote}{\end{notation}}
\newcommand{\bdefi}{\begin{definition}} 
\newcommand{\edefi}{\end{definition}} 
\newcommand{\bc}{\begin{center}} 
\newcommand{\ec}{\end{center}}
\newcommand{\mail}[1]{\href{unina:#1}{\texttt{#1}}}
\author{Monica De Angelis\thanks{Univ. of Naples  "Federico II", Dip. Mat. Appl. "R.Caccioppoli",
 \newline\mail{modeange@unina.it}}}
\title{Hopf bifurcations in dynamics of excitable systems}
\begin{document}
\maketitle

\abstract{A general FitzHugh-Rinzel model, able to describe several  neuronal phenomena, is considered. Linear stability and Hopf bifurcations are investigated by means of the spectral equation for the ternary  autonomous   dynamical system and the analysis is driven by both an admissible critical point and a  parameter which characterizes the system.}

\vspace{3mm} \textbf{Keywords} {FitzHugh Rinzel model,  Linear stability, Hopf bifurcations,  Neuron bursting frequency}


\maketitle

\section{Introduction }

The physiological and chemical properties that characterize neurons make them able to receive, process and transmit electrical signals that, associated with ionic currents, cross the membrane of the neuron. These electrical signals are called nerve impulses, while the difference in electrical charge that exists between the inside and outside of the neuronal cell is called membrane potential. The variation in the membrane potential is called action potential and it travels along the axon and is transmitted unchanged to other neurons in the form of electrical impulses. In this way, information is transmitted from one neuron to another, forming what is known as  synapse. This phenomenon is well known in literature and an extensive bibliography exists in regard \cite{j62,i,ks}. A reference point for these studies are   the works  of Hodgkin and Huxley [HH], who developed the model of the propagation of an electrical signal along a squid axon  (an axon so great to be called  giant). Their model consists of a system of four differential equations describing the dynamics of the membrane potential and the three fundamental ionic currents: the sodium current, the potassium current and the leakage current, which is mainly due to chlorine but also considers the effect of other minor ionic currents.  However, the non linearity and high dimensionality of the HH model made the analysis too complicated, so that simpler models were introduced to allow the essential aspects of the dynamics of models to be captured.

One of these models is the FitzHugh-Nagumo system (FHN) where, indicating by  $ \, U(x,t)\, $  the trasmembrane potential and by $\,W(x,t)\,$  a  variable associated with the contributions to the membrane current from sodium, potassium and other ions, it is given by

\begin{equation}
\label{FHN}
  \left \{
   \begin{array}{lll}
    \displaystyle{\frac{\partial \,U }{\partial \,t }} =\,  D \,\frac{\partial^2 \,U }{\partial \,x^2 }
     \,-\, W\,\, + f(U ) \,  \\
\\
\displaystyle{\frac{\partial \,W }{\partial \,t } }\, = \, \varepsilon (-\beta W +c +U).
  \end{array}
  \right.
\end{equation}

\vspace{3mm}
 \noindent Constant $ D\, > \, 0\, $ is a diffusion coefficient related to the  axial current in the axon. It follows from the HH theory where, denoting by $ d $ the diameter of the axon and  by $ r_i $ the resistivity, the spatial variation of the potential $ V $ gives the term $ (d/4 r_{i}) V_{xx} $ from which the term $ D\, U_{xx} $  is deduced \cite{j62}. Furthermore $ \varepsilon,  \, c, \,$ and $\, \beta \, $ are  constants that characterize the model's kinetic.

 The documentation is numerous and the analysis is extensive (see, for instance,    \cite{krs,mda18} and references therein).

\noindent As for  function $\, f(U),\,$  it depends on the reaction kinetics of the model and can  assume various expressions such as  a piecewise linear form,   or  $ f(U) = U-U^3/3 $. Besides, more in general,  function $ f(u)  $ assumes the following form \cite{i,ks}:

\begin{equation}                 \label{12}
f(U)= U\, (\, a-U \,) \, (\,U-1\,). \,
\end{equation}

\vspace{3mm}\noindent The cubic term is due to an instantaneous inversion of the sodium permeability and can be thought to play the same role as the $ m $  variable in the HH model, where the variable of activation of the channels of sodium is considered. Hence,  $ a  $  represents a threshold constant  and  is  an excitability parameter\cite{GAR}. In addition,  $ a $ can take both positive and negative values (see,f.i.\cite{zb}) and cases  with function $ a(x) $ are considered in \cite{AD} for inhomogeneous means.

Besides, one aspect worth noting is the existence of an equivalence between the FHN model and the third-order equation characterizing Josephson junctions in superconductivity {\cite{df213,nono,scott}}.  It follows that the analysis of such models is reflected in both biological and superconducting phenomena and, in addition, in dissipative problems \cite{mrvoigt,mda12,ddf12}.

Similarly, in order to investigate other phenomena such as, for example, bursting oscillations, the  well known system of FitzHugh-Rinzel (FHR) can be considered \cite{k2019,kl,arxiv,misa, moca21}.
This model is derived from the FHN model  and, unlike the latter, has an additional variable that changes periodically from a rapid spike oscillation to a silent phase during which the membrane potential changes slowly \cite{ks}.

Indeed, bursting phenomena occur in various scientific fields (see, f.i.\cite{2020} and references therein), and many devices are being built to mimic the behavior of a biological synapse, suggesting that electronic synapses may be introduced in the future to directly connect neurons \cite{clag}.  As a result, the FHR system is increasingly being studied to provide a mathematical description of  physical phenomena occurring in organisms.

\vspace{3mm} The  FitzHugh-Rinzel   model  considered  in this paper is the   following one:

\begin{equation}
\label{22}
  \left \{
   \begin{array}{lll}
    \displaystyle{\frac{d \,U }{d \,t }} =\,   -a\,U \,\,+ U^2\, (\,a+1\,-\frac{1}{k}\,U\,)
     \,-\, W\,\,+Y\,\,  +I \,  \\
\\
\displaystyle{\frac{d \,W }{d \,t } }\, = \, \varepsilon (-\beta W +c +U)
\\
\\
\displaystyle{\frac{d \,Y }{d \,t } }\, = \,\delta (-U +h -dY)
   \end{array}
  \right.
\end{equation}


\vspace{3mm}\noindent  where the physical variables $(U,W,Y)$ represent, respectively, the transmembrane potential, the recovery variable
\noindent and the slow current in the dendrite. Moreover, the parameter $ \varepsilon $ specifies the relationship between the time constants of the activator and inhibitor \cite{GAR}, and  $ c $ and $ \beta $ can be related to the number of cell membrane channels open to sodium and potassium ions, respectively\cite{rr}. Constant  $ I $ measures the amplitude of the external stimulus current and is modulated by the variable $ Y $ on a slower time scale \cite{ks}. In addition, if $ \beta \varepsilon $ and $ \delta d $ are positive constants, they can be regarded as the coefficients of viscosity \cite{R3}.


 \vspace{3mm} When  $ k=3 $ and $ a=-1, $ (\ref{22}) turns into this model:

\begin{equation}
\label{21p}
  \left \{
   \begin{array}{lll}
    \displaystyle{\frac{d \,U }{d \,t }} =\, U-U^3/3 + I 
     \,-\, W\,\,+Y\,\, \,  \\
\\
\displaystyle{\frac{d \,W }{d \,t } }\, = \, \varepsilon (-\beta W +c +U)
\\
\\
\displaystyle{\frac{d \,Y }{d \,t } }\, = \,\delta (-U +h -dY).
   \end{array}
  \right.
\end{equation}                                                      

\vspace{3mm}\noindent often studied  in literature (see, f.i. \cite{kl,R3,k2019} and references therein).

\vspace{3mm} Aim of the paper is to analyze the linear stability of the critical points of the FHR system, as well as to highlight the cases of Hopf bifurcations. Considering the spectrum equation, and its eigenvalues,  stability is evaluated by the Lienard-Chipart criterion. Furthermore, for what concerns instability, showing that  the problem can be expressed by way of  a positive parameter $ R $, the steady and/or oscillatory Hopf  bifurcations cases are determined by means of the instability coefficient power (ICP) method introduced by Rionero (see, f.i. \cite{R3,R4} and references therein). The plan of the paper is the following one: section 2 highlights some premises by which the subsequent theorems will be proved. In section  3 the mathematical problem and  linear operator $ L  $  with its invariants  is given. Finally, in section 4 and section 5, Hopf bifurcations driven by critical point $ \bar U  $  and driven by coefficient  $ -\eta= - \varepsilon  \beta  $ are evaluated.

\section{Some premises}

Due to the oscillatory activities of neurons, the onset of oscillatory bifurcations has gained the attention of many researchers.  Regarding the study of Hopf's bifurcations,  an extensive literature exists (see, f.i. \cite{R1,ccdt,R3,R4} and references therein). In order to justify the results stated here, some introductory considerations will be required. 

Indeed, in relation to linear stability, according to \cite{R1} when a phenomenon is modelled by  the system: 

\begin{equation}
\frac{d\textbf{U}}{dt} = \textbf{F} \quad \quad   t\geq 0, \quad  \textbf{U}(0)= \textbf{U}_0
\end{equation}


\noindent   introducing a fixed solution  $ \bold{\bar U} $ and the perturbation 
$ \textbf{u} = \textbf{U}- \bf{\bar U},  $ the behaviour of $ \textbf{u} $ is governed by:

\begin{equation} \label{2}
\frac{d \textbf{u}}{dt} = L \textbf{u} + N\textbf{u}, \quad \quad   t\geq 0, \quad  \textbf{u}(0)= \textbf{u}_0
\end{equation}

\vspace{3mm}  \noindent with $\textbf{u}_0  $ initial perturbation to $  \bold{\bar U} $  and  
$ (N\textbf{u})_{\textbf{u}_0} =\bold 0. $

\noindent Considering the linear  operator

 \begin{equation}
  \left \{
   \begin{array}{lll}
   L = \parallel a_{i,j}\parallel,  \quad(i,j=1,2...,n)\quad 
   \\
   \\
   a_{i,j} =  const. \in {\cal R} \,\,  \mbox{and independent from}\,\, t,      \end{array}
  \right.
\end{equation}

\vspace{3mm} 
\noindent the stability and instability  of   $ \bold{\bar U} $ is called linear  if it is evaluated via the linear system

\vspace{3mm} 
\noindent
\begin{equation} \label{2}
\frac{d \textbf{u}}{dt} = L \textbf{u} , \quad  t\geq 0, \quad \textbf{u}(0)= \textbf{u}_0
\end{equation}

\vspace{3mm} 
\noindent neglecting the nonlinear contribution  $N\textbf{u}.  $  


 In this regard, some theorems can be provided. 


\begin{theorem} 

If 

\begin{equation}  \label{7}
 det \,(a_{i,j}- \lambda \,\,\delta_{i,j} )  =0  \qquad \delta_{i,j} = \mbox { Kronecker numbers}
\end{equation}

\vspace{3mm}\noindent is the spectral equation whose  eigenvalues of the   $ n \,\mbox x\,n $ matrix $ ||a_{i,j} || $ are   $ \lambda_i (i =
1, 2,3...,n),$  and if and only if all the eigenvalues have negative real parts, then  $  \bold u=\bold 0  $  is \emph{linearly}
globally attractive and asymptotically exponentially stable. Otherwise, if there exists at least  an  eigenvalue with positive real part, then  $ \bold u =\bold 0  $ is 
unstable. \hbox{}\hfill$\square$\hspace{2.82mm}
\end{theorem}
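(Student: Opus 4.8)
The plan is to reduce the constant-coefficient linear system to its explicit solution through the matrix exponential and to read off stability directly from the spectrum. First I would write the unique solution of the associated Cauchy problem as $\mathbf{u}(t) = e^{Lt}\,\mathbf{u}_0$, so that the asymptotic behaviour of every perturbation is governed by that of the propagator $e^{Lt}$. Since $L=\|a_{i,j}\|$ has constant real entries, I would pass to its Jordan canonical form $L = P\,J\,P^{-1}$, whence $e^{Lt} = P\, e^{Jt}\, P^{-1}$ and the analysis decouples block by block.

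The heart of the argument is the explicit structure of $e^{Jt}$ on a single Jordan block attached to an eigenvalue $\lambda_i$ of algebraic multiplicity $m_i$: its entries are of the form $\frac{t^{k}}{k!}\,e^{\lambda_i t}$ with $0 \le k \le m_i - 1$. Taking moduli, each such entry behaves like $t^{k}\, e^{(\mathrm{Re}\,\lambda_i)\,t}$. From this representation I would derive both implications. If every root $\lambda_i$ of the spectral equation (\ref{7}) satisfies $\mathrm{Re}\,\lambda_i < 0$, then, the exponential decay dominating any polynomial prefactor, there exist constants $M \ge 1$ and $\omega > 0$ such that $\|e^{Lt}\| \le M\, e^{-\omega t}$; consequently $\|\mathbf{u}(t)\| \le M\, e^{-\omega t}\,\|\mathbf{u}_0\|$ for \emph{every} initial datum, which is precisely global attractivity together with exponential asymptotic stability of $\mathbf{u}=\mathbf{0}$.

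For the necessity I would argue by contraposition: if some eigenvalue $\lambda_j$ has $\mathrm{Re}\,\lambda_j > 0$, then selecting $\mathbf{u}_0$ as the real part of a corresponding eigenvector yields a solution whose norm grows at least like $e^{(\mathrm{Re}\,\lambda_j)\,t}$, so the origin cannot be stable. Combining the two directions establishes the stated dichotomy, the eigenvalues being exactly the roots of the characteristic equation (\ref{7}).

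The step I expect to be most delicate is the non-diagonalizable case: one has to ensure that the polynomial factors $t^{k}$ arising from nontrivial Jordan blocks are genuinely dominated by $e^{(\mathrm{Re}\,\lambda_i)\,t}$ when $\mathrm{Re}\,\lambda_i<0$, and, in the instability direction, that a defective block with $\mathrm{Re}\,\lambda_j=0$ (a borderline regime the statement deliberately leaves unclassified) is correctly kept outside the negative-real-part case. Once the uniform bound $\|e^{Lt}\| \le M\, e^{-\omega t}$ is secured, the global and exponential character of the convergence follows at once from the linearity of the system.
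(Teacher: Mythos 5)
Note first that the paper contains no proof of this theorem: it is stated as a known classical result, imported from Rionero \cite{R1}, with the closing square marking it as quoted rather than demonstrated. So your argument is not being measured against an alternative proof in the text; it supplies the standard one, and the sufficiency half is sound and complete. Writing $\mathbf{u}(t)=e^{Lt}\mathbf{u}_0$, passing to the Jordan form, reading off entries of the type $\frac{t^k}{k!}e^{\lambda_i t}$ with $0\le k\le m_i-1$, and choosing $\omega$ with $0<\omega<\min_i|\mathrm{Re}\,\lambda_i|$ so that the polynomial prefactors are absorbed, you obtain $\|e^{Lt}\|\le M e^{-\omega t}$ and hence $\|\mathbf{u}(t)\|\le M e^{-\omega t}\|\mathbf{u}_0\|$ for every initial datum, which is exactly the claimed linear global attractivity together with exponential asymptotic stability; your worry about defective blocks is handled correctly by this choice of $\omega$.

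Two points need tightening. First, your necessity argument by contraposition only treats $\mathrm{Re}\,\lambda_j>0$. That establishes the theorem's second assertion (instability when a root has positive real part), but the stated equivalence is an if-and-only-if with exponential asymptotic stability, so you must also dispose of the case $\mathrm{Re}\,\lambda_j=0$: a mode $e^{\lambda_j t}\mathbf{v}$ with $|e^{\lambda_j t}|=1$ yields solutions with arbitrarily small initial data whose norms do not tend to zero, so the origin cannot be globally attractive, let alone exponentially stable. This is one line, but without it the ``only if'' direction is not closed, and your own remark about borderline blocks conflates it with the (genuinely unclassified) question of whether $\mathrm{Re}\,\lambda_j=0$ forces instability. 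Second, when you take $\mathbf{u}_0$ as the real part of a complex eigenvector, be aware that $\mathrm{Re}\,(e^{\lambda_j t}\mathbf{v})$ oscillates and, for special $\mathbf{v}$, could be small at particular times; the clean fix is to observe that the real and imaginary parts of $e^{\lambda_j t}\mathbf{v}$ are both solutions, not both identically zero, and a nontrivial solution of the linear flow never vanishes (uniqueness run backwards), so its norm is unbounded along a sequence of times $t_n\to\infty$ with growth rate $e^{(\mathrm{Re}\,\lambda_j)t_n}$. With these two repairs your proof is complete and coincides with the classical route that the paper's citation of \cite{R1} presupposes.
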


 
 In addition, as proved in \cite{R1}, for a system formed by three equations such as the FHR  model,  the spectrum equation  (\ref{7}) of $ L  $  is reduced to the following expression:

\begin{equation}  \label{spe1}
 {\cal P}(\lambda) = \lambda ^3-I_1 \lambda^2 + I_2 \lambda -I_3  
 \end{equation}

\noindent where 

\begin{equation} 	\label{I}
I_{1} = a_{11}+a_{22}+a_{33}; \quad \quad I_{3} =  \mbox{det}  \,\, \parallel a_{i,j} \parallel,
\end{equation}

\begin{equation} \label{II}
I_{2} = \begin{vmatrix} a_{11} & a_{12} \\ a_{21} & a_{22} \end{vmatrix}+\begin{vmatrix} a_{11} & a_{13} \\ a_{31} & a_{33}\end{vmatrix}+\begin{vmatrix} a_{22} & a_{23} \\ a_{32} & a_{33}\end{vmatrix}, 
\end{equation}

\vspace{3mm} \noindent   represent the  invariants of $ L $ whose spectrum is the set   $ \sigma =\{ \lambda_1, \lambda_2, \lambda _3\} $  of its eigenvalues.
 Moreover, connected to the invariants $ I_i  (i=1,2,3), $ we can introduce the quantities:

\vspace{3mm} \noindent \begin{equation} \label{A}
 A_1=- \mbox{trace of}  \,L\,\,= - (\lambda_1 +\lambda_2 +\lambda_3 ) = - I_1; 
 \end{equation}

\begin{equation}\label{AA}
A_2 = \begin{vmatrix} a_{11} & a_{12} \\ a_{21} & a_{22} \end{vmatrix}+\begin{vmatrix} a_{11} & a_{13} \\ a_{31} & a_{33}\end{vmatrix}+\begin{vmatrix} a_{22} & a_{23} \\ a_{32} & a_{33}\end{vmatrix}  = \lambda_1 (\lambda_2 +\lambda_3 )+\lambda_2 \lambda_3 =I_2
\end{equation}

\noindent and 
 
 \begin{equation} \label{AAA}
  A_3= - \mbox{det of }\,\, L  \,=\,\,- \, \lambda_1\lambda_2\lambda_3 =- I_3
 \end{equation}

\vspace{3mm}\noindent  and,  according to \cite{R3}, the following  Lienard-Chipart criterion  holds:


\begin{theorem} \label{thaa}

  If  and only if

\begin{equation}  \label{aaa}
A_k > 0,\quad  (k=1,2,3)  \quad \mbox{and } \,\,  A_0=  A_1\,A_2 -A_3 > 0,
\end{equation}

\vspace{3mm} \noindent all the eigenvalues have negative real part. In particular, each of the conditions:

 \begin{equation}  \label{aaas}
A_1 > 0,\quad A_2 > 0, \quad A_3 > 0 ,
\end{equation}

\vspace{3mm} \noindent is necessary for all the roots to have negative real parts. Otherwise some roots will have positive real parts. 
\hbox{}\hfill$\square$\hspace{2.82mm}
\end{theorem}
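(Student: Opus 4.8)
The plan is to reduce the statement to an elementary sign analysis of the roots of the cubic (\ref{spe1}). Using (\ref{A})--(\ref{AAA}) I would first rewrite the spectral polynomial as the monic real cubic $\mathcal{P}(\lambda)=\lambda^{3}+A_{1}\lambda^{2}+A_{2}\lambda+A_{3}$, whose roots $\lambda_1,\lambda_2,\lambda_3$ satisfy the Vieta relations already recorded in (\ref{A})--(\ref{AAA}). Since the coefficients are real, only two configurations can occur: three real roots, or one real root $-\alpha$ together with a conjugate pair $p\pm iq$ (with $q\neq0$). The algebraic engine in the second case is the factorization $\mathcal{P}(\lambda)=(\lambda+\alpha)\big(\lambda^{2}-2p\lambda+(p^{2}+q^{2})\big)$, which gives $A_1=\alpha-2p$, $A_2=p^{2}+q^{2}-2p\alpha$, $A_3=\alpha(p^{2}+q^{2})$ and, after a short computation, the identity
\[ A_0=A_1A_2-A_3=-2p\big[(\alpha-p)^{2}+q^{2}\big]. \]
I would then carry both configurations through the necessity and the sufficiency parts.

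For necessity --- which also yields the instability assertion by contraposition --- I would assume every root has negative real part and read the signs straight off the Vieta relations. If the roots are real and negative their sum is negative, their pairwise products positive and their product negative, so $A_1,A_2,A_3>0$; in the oscillatory case the formulas above give the same three inequalities at once from $\alpha>0$ and $p<0$. The condition $A_0>0$ follows too: from the displayed identity when $p<0$, and, for three negative real roots, because $A_1A_2-A_3$ is then a sum of positive monomials in $-\lambda_1,-\lambda_2,-\lambda_3$. Consequently, if even one of the conditions (\ref{aaas}) is violated the polynomial cannot be stable, so not all the roots can have negative real part, which is the instability asserted in (\ref{aaas}).

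For sufficiency I would assume (\ref{aaa}) and again split into the two cases. In the oscillatory case $A_3=\alpha(p^{2}+q^{2})>0$ forces $\alpha>0$, so the real root $-\alpha$ is stable, and since $(\alpha-p)^{2}+q^{2}>0$ the identity shows that $A_0>0$ is equivalent to $p<0$, so the conjugate pair is stable as well. In the three-real-root case I would argue by contradiction: $A_3>0$ makes the product of the roots negative, hence an odd number of them is negative; were they not all negative there would be exactly one negative root $\lambda_1$ and two positive roots $\lambda_2,\lambda_3$, and then $A_1>0$ gives $|\lambda_1|>\lambda_2+\lambda_3$, so that, using $(\lambda_2+\lambda_3)^{2}\ge4\lambda_2\lambda_3$, one finds $A_2=\lambda_1(\lambda_2+\lambda_3)+\lambda_2\lambda_3<-3\lambda_2\lambda_3<0$, contradicting $A_2>0$. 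Hence all three roots are negative and the \emph{if} direction is complete.

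I expect the main obstacle to be the oscillatory branch of the sufficiency argument: everything rests on the identity $A_0=-2p[(\alpha-p)^{2}+q^{2}]$, which is exactly what translates the scalar inequality $A_0>0$ into the geometric requirement $p<0$ that the real part of the oscillatory mode be negative --- a fact the algebraic signs of $A_1,A_2,A_3$ cannot detect on their own, and precisely the place where the incipient Hopf bifurcations ($p=0$, i.e.\ $A_0=0$) studied later will appear. I would close by noting that, a posteriori, $A_2>0$ is redundant, since $A_1>0$, $A_3>0$ and $A_0>0$ already imply $A_1A_2>A_3>0$; this economy is what characterizes the Lienard--Chipart form of the test relative to the full Hurwitz conditions.
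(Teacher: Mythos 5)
Your proof is correct, and there is nothing in the paper to compare it against: the paper states this theorem as a quoted criterion (Lienard--Chipart) with a closing tombstone and a citation to \cite{R3}, giving no internal proof at all. Your argument therefore supplies, self-contained and at cubic-specific elementary level, what the paper delegates to the literature. The key computations check out: expanding $(\lambda+\alpha)\left(\lambda^{2}-2p\lambda+p^{2}+q^{2}\right)$ does give $A_1=\alpha-2p$, $A_2=p^{2}+q^{2}-2p\alpha$, $A_3=\alpha(p^{2}+q^{2})$, and a direct computation confirms the pivotal identity $A_1A_2-A_3=-2p\left[(\alpha-p)^{2}+q^{2}\right]$, which is exactly the mechanism that makes $A_0>0$ equivalent to $p<0$ once $\alpha>0$ is forced by $A_3>0$; your remark that $A_0=0$ with $p=0$ is the incipient Hopf case is precisely how the paper later uses the criterion in Theorems 4.1--4.2. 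The real-root branch of sufficiency is also sound: $A_3>0$ excludes zero roots and leaves either three negative roots or exactly one, and your estimate $A_2<-(\lambda_2+\lambda_3)^{2}+\lambda_2\lambda_3\le-3\lambda_2\lambda_3<0$ correctly eliminates the one-negative/two-positive configuration. One caveat worth flagging: your contraposition accurately yields that if some $A_k\le0$ then not all roots have \emph{negative} real part, i.e.\ some root has \emph{nonnegative} real part; the theorem's final clause ("some roots will have positive real parts") is strictly stronger and fails at the boundary --- e.g.\ $\lambda^{3}+\lambda$ has $A_1=A_3=0$ and roots $0,\pm i$, none with positive real part. That imprecision sits in the statement as quoted, not in your argument, and your phrasing ("not all the roots can have negative real part") is the defensible version. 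Finally, your closing observation that $A_2>0$ is redundant given $A_1>0$, $A_3>0$ and $A_0>0$ (since $A_1A_2>A_3>0$ and $A_1>0$ force $A_2>0$) is correct and is exactly the economy distinguishing the Lienard--Chipart form from the full Routh--Hurwitz conditions.
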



Moreover, taking into account  that the instability  can occur  only  via a zero eigenvalue ($ \lambda =0 \Leftrightarrow A_3=0) $ or via a pure imaginary eigenvalues,  $ \lambda_{1,2} =\pm i\omega  \,\,(i  \,\,\mbox{imaginary unit,} $ 
$\,\, \omega \in \Re^+ )$ such that ${\cal P}(i\omega,R)=0, $   the onset of instability will be defined either as steady bifurcation or Hopf  bifurcation depending on wether  the
instability occurs through a steady or oscillatory state  \cite{R1}.

When the problem at issue depends on a positive parameter $ R, 
  $  let denote  by $ R_{c_k} $ the  lowest roots of value of $ R  $ such that  $ A_k(R)=0 $  for $ k=1,2,3. $ According to \cite{R3}, it is possible to introduce the  
\begin{equation}
 \mbox{"instability coefficient power" }(ICP)_k  \,\,\mbox{ of}\,\, A_k :\quad (ICP)_k =\frac{1}{R_{C_{k}}}
\end{equation}

\noindent and the following theorem   holds:
 
\begin{theorem}
Let $A_{\bar k}$ be the spectrum equation coefficient with the biggest ICP and let the critical  point  $ \bar C $ be linearly asymptotically stable at $R =\bar R = 0.$ Then, at the growing of $R$ from $R = 0$, the instability occurs at
$R= R_{C_{\bar k}}  $
and one has a steady bifurcation if $ \bar k =3, $ while an oscillatory bifurcation occurs at an $ R\in ]0, R_{C_{\bar k}}[ $  if $k < 3$.
\hbox{}\hfill$\square$\hspace{2.82mm}
\end{theorem}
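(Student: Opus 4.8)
The plan is to convert the two instability mechanisms recalled just before the statement --- a zero eigenvalue and a pure imaginary pair --- into sign conditions on $A_1,A_2,A_3$ and on the auxiliary quantity $A_0=A_1A_2-A_3$, and then to follow the sign of $A_0$ as $R$ increases from $0$. First I would record the algebraic description of a Hopf state. Writing the spectrum polynomial as ${\cal P}(\lambda)=\lambda^3+A_1\lambda^2+A_2\lambda+A_3$ and inserting $\lambda=i\omega$ with $\omega\in\Re^+$, the vanishing of the real and imaginary parts gives ${\cal P}(i\omega)=(A_3-A_1\omega^2)+i\,\omega\,(A_2-\omega^2)=0$, that is $\omega^2=A_2$ and $A_3=A_1A_2$; hence a pure imaginary pair exists \emph{if and only if} $A_2>0$ and $A_0=0$. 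Since moreover ${\cal P}(0)=A_3$, a zero eigenvalue occurs exactly when $A_3=0$. By Theorem \ref{thaa}, therefore, $\bar C$ stays stable precisely while $A_1,A_2,A_3$ and $A_0$ are all positive, and the first loss of stability is governed by whichever of these four quantities first reaches zero.

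Next I would use the hypothesis that at $R=0$ the point is asymptotically stable, so $A_1(0),A_2(0),A_3(0),A_0(0)$ are all strictly positive, together with the definition of $R_{C_k}$ as the lowest zero of $A_k$ and of the largest ICP as the strictly smallest $R_{C_k}$. The decisive step is to evaluate $A_0$ at each threshold: at $R=R_{C_1}$ one has $A_1=0$, so $A_0=-A_3$; at $R=R_{C_2}$ one has $A_2=0$, so again $A_0=-A_3$; while at $R=R_{C_3}$ one has $A_3=0$, so $A_0=A_1A_2$.

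I would then split on $\bar k$. Suppose $\bar k\in\{1,2\}$, so that $R_{C_{\bar k}}$ is the smallest threshold and in particular $R_{C_{\bar k}}<R_{C_3}$. Then at $R=R_{C_{\bar k}}$ the coefficient $A_3$ is still positive, whence $A_0=-A_3<0$. As $A_0(0)>0$, the intermediate value theorem produces a first zero $R^\ast\in\,]0,R_{C_{\bar k}}[$ of $A_0$; there all three $A_k$ are still positive (because $R^\ast<R_{C_{\bar k}}\le R_{C_1},R_{C_2},R_{C_3}$), so by the characterization above a pure imaginary pair crosses the imaginary axis, the sign reversal of $A_0$ across $R^\ast$ ensuring transversality --- an oscillatory (Hopf) bifurcation strictly before $R_{C_{\bar k}}$, as claimed. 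If instead $\bar k=3$, then $R_{C_3}$ is the smallest threshold, so on $[0,R_{C_3})$ the coefficients $A_1,A_2,A_3$ all remain positive and $A_0=A_1A_2>0$ at the endpoint $R_{C_3}$; provided $A_0$ does not vanish before $R_{C_3}$, stability is retained on the whole interval and at $R_{C_3}$ the relation $A_3=0$ forces the zero eigenvalue, i.e. a steady bifurcation.

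The hard part is exactly the clause \emph{``provided $A_0$ does not vanish earlier''} in the case $\bar k=3$: nothing in the intermediate value argument by itself rules out $A_0$ reaching zero somewhere in $]0,R_{C_3}[$ and producing an oscillatory instability that would preempt the steady one. To close this I would exploit the monotone dependence of the invariants on $R$ that underlies the ICP construction --- each $A_k$ decreasing from a positive value at $R=0$ toward its first root $R_{C_k}$ --- and, with $A_1,A_2$ bounded away from $0$ on $[0,R_{C_3}]$, try to show that the product $A_1A_2$ cannot be overtaken by $A_3$ before $R_{C_3}$; should a fully general argument resist, this is the natural point either to impose an explicit monotonicity (or convexity) assumption on the maps $R\mapsto A_k(R)$, or to verify the inequality directly on the FHR invariants computed in the subsequent sections. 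The complementary half $\bar k<3$, by contrast, needs no extra hypothesis and follows unconditionally from the sign of $A_0$ at $R_{C_{\bar k}}$.
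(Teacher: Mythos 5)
First, a point of reference: the paper itself gives \emph{no} proof of this statement --- it is quoted from Rionero (\cite{R3,R1}) with the tombstone built into the statement. So the comparison must be against the way the paper actually uses the result, namely the concrete argument in the proof of the theorem of Section 4: there the author checks stability at $R=0$, evaluates $A_0=A_1A_2-A_3$ at the smallest threshold $R_{C_1}$ (where $A_1=0$, $A_3>0$, hence $A_0=-A_3<0$), and invokes continuity to find $\bar R\in\,]0,R_{C_1}[$ with $A_1A_2=A_3$, yielding ${\cal P}(i\varphi,\bar R)=0$ with $\varphi^2=A_3/A_1=A_2$. Your reconstruction for $\bar k\in\{1,2\}$ is exactly this mechanism, carried out correctly and in slightly greater generality (you also record the clean characterization: pure imaginary pair $\Leftrightarrow$ $A_2>0$ and $A_0=0$; zero eigenvalue $\Leftrightarrow$ $A_3=0$). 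That half of your proof is sound and unconditional, and it matches the paper's working method.

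The gap you flag in the case $\bar k=3$ is genuine, and in fact it is not closable from the stated hypotheses alone --- not even under the affine, decreasing dependence $A_k(R)$ that the paper's system exhibits in (\ref{31bbbxx}). Take
\begin{equation}
A_1=10-R,\qquad A_2=10-R,\qquad A_3=10\,(9-R),
\end{equation}
realizable as the characteristic coefficients of a companion matrix depending affinely on $R$. At $R=0$ one has $A_1=A_2=10>0$, $A_3=90>0$ and $A_0=100-90=10>0$, so the critical point is linearly asymptotically stable there; moreover $R_{C_3}=9<R_{C_1}=R_{C_2}=10$, so $\bar k=3$. Yet $A_0=(10-R)^2-10(9-R)=R^2-10R+10$ vanishes at $R=5-\sqrt{15}\approx 1.13$ and is negative up to $5+\sqrt{15}\approx 8.87$, with $A_2>0$ throughout: an \emph{oscillatory} instability sets in near $R\approx 1.13$, long before the putative steady bifurcation at $R_{C_3}=9$. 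Hence the $\bar k=3$ clause requires extra structural hypotheses guaranteeing $A_0>0$ on $[0,R_{C_3}]$ (as is implicitly available in Rionero's original setting, and as can be verified directly on the FHR invariants in the applications of Sections 4--5, where only the $\bar k<3$ and degenerate-equality cases are actually used). Your instinct --- either impose such a condition or verify the inequality on the concrete invariants --- is precisely what is needed; as written, your proof of the $\bar k=3$ case, like the bare statement itself, is incomplete, while your $\bar k<3$ argument fully substantiates everything the paper later relies on.
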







\section{Mathematical model}

Let consider the FHZ system (\ref{22}) and assuming

\begin{equation}
 \eta = \beta \varepsilon; \qquad  \gamma = \delta d                                                      
\end{equation}

\noindent it results:

\begin{equation}
\label{244}
  \left \{
   \begin{array}{lll}
    \displaystyle{\frac{d \,U }{d \,t }} =\, - a U + U^2(a+1) - \frac{1}{k} U^3 
     \,-\, W\,\,+Y\,\, +I  \,  \quad \quad
     \\
     \\
      
\displaystyle{\frac{d \,W }{d \,t } }\, = \,- \eta   W +\varepsilon\, c +\varepsilon \,U  \quad \quad 
\\
\\
 
\displaystyle{\frac{d \,Y }{d \,t } }\, = \,-\delta \,U +\delta\,h - \gamma \,Y
   \end{array}
  \right.
\end{equation}

\vspace{3mm}\noindent  If  $ C=(\bar U ,\bar W, \bar Y) $ is an admissible critical point, considering:

 \begin{equation}
\label{11z}
  u= U-\bar U; \qquad 
 w=W -\bar W ;   
\qquad y= Y-\bar Y
\end{equation}

\vspace{3mm}

\noindent  as the perturbation vector, from  (\ref{244}) one obtains:

\begin{equation}
\label{11xxzz}
  \left \{
   \begin{array}{lll}
    \displaystyle{\frac{d \,u }{d \,t }} =\, - \frac{1}{k}\, u^3 -  \frac{3}{k}\, u^2  \,\bar U -  \frac{3}{k}\, u\, \bar U^2 - a u + (a+1) ( u^2 + 2 u \,\bar U ) - w + y   
     \, \quad \quad
     \\
     \\
      
\displaystyle{\frac{d \,w }{d \,t } }\, =   \varepsilon  u - \eta\,\,w \quad 
\\
\\
 
\displaystyle{\frac{d \,y }{d \,t } }\, =  - \delta  u  - \gamma \,y.
   \end{array}
  \right.
\end{equation}


\noindent  Linearizing  about $ C,  $ it results:

\begin{equation}
\label{11bvc}
  \left \{
   \begin{array}{lll}
    \displaystyle{\frac{d \,u }{d \,t }} =\,  u   \big[-   \frac{3}{k}\,\,\bar U^2   \,-  a + 2 \,(a+1) \,\, \bar U\,\big]\, - w\,+y 
     \,
          \\
     \\
      
\displaystyle{\frac{d \,w }{d \,t } }\, = \varepsilon u - \eta  \,w  
\\
\\
 
\displaystyle{\frac{d \,y }{d \,t } }\, =  -\delta  \, u -\gamma \, y. 
   \end{array}
  \right.
\end{equation}

\vspace{3mm} \noindent Denoting by

\begin{equation}
L= \begin{pmatrix}
\displaystyle - 3 \,\frac{1}{k}\, \bar U^2   \,+ 2 (a+1)  \bar U\, -  a \,\, & -1 \,\, & 1 \,\, \\
\\
\displaystyle \varepsilon \,\, & - \eta  \,\, & 0\,\, \\ \\
\displaystyle -\delta \,\, & 0\,\, & -  \gamma \, 
\end{pmatrix}
\end{equation}


\vspace{3mm}\noindent the linear operator,  according to  (\ref{I})-(\ref{II}), for $ k=3,  $ one has:

\vspace{3mm}\noindent \begin{equation}
\label{31}
  \left \{
   \begin{array}{lll}
    \displaystyle I_{1} = - \bigg[\,\bar U^2   \,- 2 \,(a+1) \, \bar U\, +  a + \eta\,+\gamma\, \bigg]
     \,
          \\
     \\
      
\displaystyle I_{2} = -\bigg( \eta\,\,+\gamma\bigg) \bigg[- \,\bar U^2   \,+ 2 (a+1)  \,\bar U\, -  a \bigg] + \varepsilon \,   + \delta  +\eta \gamma  
\\
\\
 
\displaystyle I_{3} = - \bigg\{\gamma \,\, \bigg[\eta  \bigg( \,\bar U^2   \,- 2 (a+1)  \,\bar U\,+  a  \,\bigg )+\varepsilon \bigg ]\,+\delta \eta \bigg \}    \end{array}
  \right.
\end{equation}

\vspace{3mm} \noindent   as the  invariants of $ L. $ Besides,  taking into account (\ref{A})-(\ref{AAA}) one deduces:

 \vspace{3mm}\noindent \begin{equation}
\label{31bbb}
  \left \{
   \begin{array}{lll}
    \displaystyle A_{1} =  \,\bar U^2   \,- 2 (a+1) \,\, \bar U\, +  a + \eta\,+\gamma\, 
     \,
          \\
     \\
      
\displaystyle A_{2} = -(\eta\,\,+\gamma) (- \,\bar U^2   \,+ 2 (a+1) \,\, \bar U\, -  a) + \varepsilon \,   + \delta  +\eta \gamma  
\\
\\
 
\displaystyle A_{3} = \gamma \,\,[\,\eta  ( \,\bar U^2   \,- 2 (a+1) \,\, \bar U\, +  a)+\varepsilon ]\,+\delta \eta,     
\end{array}
  \right.
\end{equation} 

\vspace{3mm}\noindent  and letting 

\begin{equation}  \label{uu}
 \displaystyle \Gamma= \bar U^2  \,- 2 (a+1)  \bar U\, +  a,  
\end{equation}


\noindent one obtains

 \begin{equation}
\label{31bbbx}
  \left \{
   \begin{array}{lll}
    \displaystyle A_{1} =  \Gamma + \eta\,+\gamma\, 
     \,
          \\
     \\
      
\displaystyle A_{2} = (\eta\,\,+\gamma) \, \Gamma + \varepsilon \,   + \delta  +\eta \gamma  
\\
\\
 
\displaystyle A_{3} = \gamma \,\eta \, \Gamma+ \gamma \varepsilon\,+\delta \eta 
\\
\\
\displaystyle A_{0} = A_1 A_2-A_3 = (\Gamma + \eta\,+\gamma\,)[(\eta\,\,+\gamma) \, \Gamma + \varepsilon \,   + \delta  +\eta \gamma ]- ( \gamma \,\eta \, \Gamma+ \gamma \varepsilon\,+\delta \eta). 
 \end{array}
  \right.
\end{equation}

 
 \section{Hopf bifurcations driven by $ \bar U $}

The FHR system depends on several parameters, and according to  each coefficient, various  Hopf bifurcations conditions can be obtained.  

In order to study  Hopf bifurcations driven by  critical point $ \bar U $, the attention is focused on

 \begin{center}
$ \displaystyle \Gamma= \bar U^2  \,- 2 (a+1)  \bar U\, +  a $
\end{center}

\vspace{3mm} \noindent already introduced in (\ref{uu}), and the following theorem  for linear stability can be proved:
 
 \begin{theorem} \label{th2}
 Let  $ \bar C= (\bar U, \bar W, \bar Y) $ be an admissible critical point and let assume  constants $ (\varepsilon,\, \delta,\, d,\,  \beta),  $  be positive. Then,    whatever the value of variable $ a \in R $ may be,  if

   \begin{equation} \label{asd}
  \left \{
  \begin{array}{lll}
   \displaystyle  \bar U\leq -\sqrt{a^2+a+1}\, +a+1
      \\ 
      \mbox{or}\\
        \displaystyle  \bar U\geq \sqrt{a^2+a+1}\, +a+1,
 \end{array}
  \right.
\end{equation}

\vspace{3mm}\noindent then the critical point $ \bar C $ is linearly, globally attractive and asymptotically exponentially stable.
 \end{theorem}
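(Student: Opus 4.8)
The plan is to verify the four Lienard-Chipart inequalities (\ref{aaa}) of Theorem \ref{thaa} by reducing them to a single sign condition on the quantity $\Gamma$ of (\ref{uu}). First I would note that the two-sided hypothesis (\ref{asd}) on $\bar U$ is exactly the statement $\Gamma \geq 0$. Indeed, regarding
\[
\Gamma = \bar U^2 - 2(a+1)\,\bar U + a
\]
as a quadratic in $\bar U$, its discriminant equals $4\big[(a+1)^2 - a\big] = 4(a^2+a+1)$, which is strictly positive for every $a\in\mathbb{R}$ because $a^2+a+1 = (a+\tfrac12)^2 + \tfrac34$. The two real roots are therefore $\bar U = (a+1)\pm\sqrt{a^2+a+1}$, so $\Gamma\geq 0$ holds precisely on the union of the two half-lines appearing in (\ref{asd}). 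Since $\eta=\beta\varepsilon>0$ and $\gamma=\delta d>0$ by the positivity assumption on $(\varepsilon,\delta,d,\beta)$, it then suffices to show that $\Gamma\geq 0$ forces (\ref{aaa}).

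The three inequalities $A_k>0$ $(k=1,2,3)$ are read off directly from the closed forms in (\ref{31bbbx}): each $A_k$ is the sum of a nonnegative multiple of $\Gamma$ and a strictly positive combination of $\eta,\gamma,\varepsilon,\delta$. Explicitly, $A_1=\Gamma+\eta+\gamma>0$, $A_2=(\eta+\gamma)\Gamma+\varepsilon+\delta+\eta\gamma>0$ and $A_3=\gamma\eta\,\Gamma+\gamma\varepsilon+\delta\eta>0$, using $\Gamma\geq 0$ in each line.

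The real work is the fourth inequality $A_0=A_1A_2-A_3>0$, and this is the step I expect to be the main obstacle, since a careless expansion leaves mixed $\varepsilon$- and $\delta$-terms whose signs are not obvious. Setting $s=\eta+\gamma$ and expanding the product, the potentially harmful cross terms recombine through $s\varepsilon-\gamma\varepsilon=\eta\varepsilon$ and $s\delta-\delta\eta=\gamma\delta$, giving
\[
A_0 = (\eta+\gamma)\,\Gamma^2 + \big[\varepsilon+\delta+(\eta+\gamma)^2\big]\Gamma + \eta\varepsilon + \gamma\delta + (\eta+\gamma)\eta\gamma.
\]
Under $\Gamma\geq 0$ every coefficient here is nonnegative, and the constant term $\eta\varepsilon+\gamma\delta+(\eta+\gamma)\eta\gamma$ is strictly positive, so $A_0>0$.

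With (\ref{aaa}) established, Theorem \ref{thaa} guarantees that all three eigenvalues of $L$ have negative real part, and the spectral stability theorem stated at the opening of Section 2 then yields that the null perturbation $\mathbf{u}=\mathbf{0}$---that is, the critical point $\bar C$---is linearly globally attractive and asymptotically exponentially stable, which is the assertion of Theorem \ref{th2}.
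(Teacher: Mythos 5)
Your proof is correct and takes essentially the same approach as the paper: it recognizes hypothesis (\ref{asd}) as exactly $\Gamma\geq 0$ via the roots $(a+1)\pm\sqrt{a^2+a+1}$ of the quadratic (\ref{uu}), and then verifies the Lienard--Chipart conditions (\ref{aaa}) of Theorem \ref{thaa} from the expressions (\ref{31bbbx}). The only difference is completeness: where the paper merely asserts that the $A_k$ $(k=0,1,2,3)$ are nonnegative and increasing in $\Gamma$, you supply the explicit and correct expansion $A_0=(\eta+\gamma)\Gamma^2+\big[\varepsilon+\delta+(\eta+\gamma)^2\big]\Gamma+\eta\varepsilon+\gamma\delta+(\eta+\gamma)\eta\gamma$, which makes the one nontrivial step fully rigorous.
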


\begin{proof} Condition  (\ref{asd})   ensures that $ \Gamma \geq 0,  $  and it is possible to prove that the positiveness of the FHR system's constants implies  that   $ A_k , \, (k=0,1,2,3), $  determined in   (\ref{31bbbx}), are all non-negative. Moreover, they are  increasing functions of $ \Gamma. $

This ensures  that conditions (\ref{aaa})   of theorem \ref{thaa}  state, and hence theorem holds.

\end{proof}

When conditions (\ref{asd}) are not satisfied, i.e   the critical point $ \bar U  $ is such that the following inequality:

   \begin{equation} \label{asd1}
  \begin{array}{lll}
    \displaystyle -\sqrt{a^2+a+1}\, +a+1 < \bar U < \sqrt{a^2+a+1}\, +a+1   \quad   \forall a \in {\cal R }
 \end{array}
\end{equation} 


\noindent holds, then it results $ \Gamma <0 $ and in this case it is possible to introduce a positive parameter $ R $  as \emph{"bifurcation parameter"}. Indeed if we  let:

\begin{equation}
\label{111poi}
  \left \{
   \begin{array}{lll}
    \displaystyle  R=- \Gamma= -[ \bar U^2  \,- 2 (a+1)  \bar U\, +  a ]>0
          \\
     \\
 \displaystyle  c_1 = \eta  + \gamma
 \\
\\
 
\displaystyle  c_2  =\,\, \frac{ \varepsilon \,+  \delta  + \eta \,\gamma  }{\eta \,+\,\gamma }  
\\ \\ 
\displaystyle  c_3 = \, \frac{\gamma \varepsilon  +\eta \delta  }{\eta \gamma   } = \frac{d+\beta}{\beta d}
   \end{array}
  \right.
\end{equation}


\noindent it results:

 \begin{equation}
\label{31bbbxx}
  \left \{
   \begin{array}{lll}
    \displaystyle A_{1} =  -R  + c_1 
     \,
          \\
     \\
      
\displaystyle A_{2} = - c_1 \, R +   c_1 \,\,c_2  
\\
\\
 
\displaystyle A_{3} = -\gamma \,\eta \,R +  c_3 \,\,\gamma \,\eta \,
 \end{array}
  \right.
\end{equation}


\noindent with:

\begin{equation}
\label{x}
      \displaystyle A_{1} = 0 \Leftrightarrow R = c_1 ;
     \,\quad  A_{2} =0 \Leftrightarrow  \,R =c_2;   \quad A_{3} = 0 \Leftrightarrow R =  c_3. 
\end{equation}


\noindent So that,  denoting by $ R_{c_k} $ the  lowest roots of value of $ R  $ such that  $ A_k=0 $  for $ k=1,2,3, $   one has:

\begin{equation}
\displaystyle R_{c_k} = \min_{(\beta,d, \varepsilon, \delta)\in R^+ }   c_k \quad k=1,2,3 
\end{equation}

\noindent and the following theorem holds:

\begin{theorem}  \label{th55}
In the hypothesis (\ref{asd1}),  let $  R= -\Gamma  = -[ \bar U^2  \,- 2 (a+1)  \bar U\, +  a ]>0 $ and let constants $ (\varepsilon,\, \delta,\, d,\,\beta ),  $  be positive.

Then, at the growing of $ R  $ from $ R=0, $  conditions

\begin{equation}\label{i}
\displaystyle \eta + \gamma <  \frac{ \varepsilon \,+  \delta  + \eta \,\gamma  }{\eta \,+\,\gamma }; \qquad  \eta + \gamma <\frac{d+\beta}{\beta d} ;  
 \end{equation}

 
\vspace{3mm} \noindent ensure that a\emph{ simple  oscillatory bifurcation} occurs at a $ \bar R \in ]0, R_{C_1}[ ,$ with a frequency $\displaystyle  \frac{\varphi }{2\pi}\, $ where $ \displaystyle \varphi^2 = \frac{A_3(\bar R)}{ A_1(\bar R)} = A_2 (\bar R). $

 \vspace{3mm} \noindent If, in particular

\begin{equation}\label{i*}
\displaystyle \eta + \gamma =  \frac{ \varepsilon \,+  \delta  + \eta \,\gamma  }{\eta \,+\,\gamma }; \qquad  \eta + \gamma <\frac{d+\beta}{\beta d} 
 \end{equation}

 \vspace{3mm}
 \noindent a simple oscillatory bifurcations occurs at a $ \bar R  \in ]0, R_{C_1}=R_{C_2} [. $

 
 \vspace{3mm} \noindent  Otherwise, if 

  \begin{equation} \label{ii}
\eta + \gamma = \frac{d+\beta}{\beta d} <  \frac{ \varepsilon \,+  \delta  + \eta \,\gamma  }{\eta \,+\,\gamma } 
\end{equation}


 \vspace{3mm}  \noindent a steady+oscillatory bifurcation appears with a frequency given by $ \varphi= (2\pi)(\sqrt{A_2})_{R_{c_1}}. $


\vspace{3mm}  \noindent Moreover, if 

\begin{equation}\label{iv}
 \displaystyle  \frac{ \varepsilon \,+  \delta  + \eta \,\gamma  }{\eta \,+\,\gamma } < \eta + \gamma < \frac{d+\beta}{\beta d} 
\end{equation}

\vspace{3mm}  \noindent { a simple  oscillatory bifurcation} occurs at a $ \bar R \in ]0, R_{C_2}[.$

\end{theorem}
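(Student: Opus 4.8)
The plan is to read off each bifurcation type from the location of the zero and pure-imaginary eigenvalues of the spectral polynomial. Writing (\ref{spe1}) as $\mathcal P(\lambda)=\lambda^3+A_1\lambda^2+A_2\lambda+A_3$ (since $A_1=-I_1,\ A_2=I_2,\ A_3=-I_3$), a zero eigenvalue occurs precisely when $A_3=0$ (steady bifurcation), while a pair $\lambda=\pm i\omega$, $\omega>0$, occurs precisely when $A_0=A_1A_2-A_3=0$ with $A_1,A_2>0$: substituting $\lambda=i\omega$ and separating real and imaginary parts gives $A_3-A_1\omega^2=0$ and $A_2-\omega^2=0$, i.e. $\omega^2=A_3/A_1=A_2$, which is exactly the stated frequency relation. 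Thus the whole argument reduces to tracking the signs of $A_1,A_2,A_3,A_0$ as $R$ grows from $0$.

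First I would record from (\ref{31bbbxx}) that $A_1,A_2,A_3$ are affine and strictly decreasing in $R$, vanishing at $R=c_1,c_2,c_3$ respectively as in (\ref{x}); hence the ICP of $A_k$ is $1/c_k$, so the largest ICP corresponds to the smallest of $c_1,c_2,c_3$, the quantity that governs the first loss of stability via the instability--coefficient--power criterion. I would then note that $A_0(R)$ is a quadratic in $R$ with positive leading coefficient $c_1$, and that at $R=0$ one has $A_1,A_2,A_3>0$ together with $A_0(0)=\eta\varepsilon+\gamma\delta+\eta\gamma(\eta+\gamma)>0$; by the Lienard--Chipart criterion (Theorem~\ref{thaa}) the point is therefore stable at $R=0$, consistently with Theorem~\ref{th2} on the boundary $\Gamma=0$.

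The heart of the proof is a case analysis on the ordering of the $c_k$. Under (\ref{i}) one has $c_1<c_2,c_3$, so the smallest critical value is $R_{C_1}=c_1$; under (\ref{iv}) instead $c_2<c_1<c_3$, so it is $R_{C_2}=c_2$. In either case call this value $R_C$ and observe $R_C<c_3$; since the corresponding coefficient ($A_1$ or $A_2$) vanishes at $R_C$, one gets $A_0(R_C)=-A_3(R_C)=-\gamma\eta(c_3-R_C)<0$. Together with $A_0(0)>0$ the intermediate value theorem produces a smallest $\bar R\in\,]0,R_C[$ with $A_0(\bar R)=0$; on $[0,\bar R[$ all four quantities remain positive, so stability persists up to $\bar R$, where $A_1,A_2,A_3>0$ and $A_0=0$ yield the pure-imaginary pair --- a simple oscillatory (Hopf) bifurcation of frequency $\varphi/2\pi$ with $\varphi^2=A_2(\bar R)=A_3(\bar R)/A_1(\bar R)$. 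The degenerate subcase (\ref{i*}), $c_1=c_2<c_3$, is identical: $A_0(c_1)=-A_3(c_1)<0$ still forces the Hopf point into $]0,R_{C_1}=R_{C_2}[$.

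The remaining case (\ref{ii}), $c_1=c_3<c_2$, is the delicate one and I would treat it directly rather than by a sign change, since here the two largest ICP's tie at $k=1$ and $k=3$. At the common value $R=c_1=c_3$ one has simultaneously $A_1=0$ and $A_3=0$, so $\mathcal P(\lambda)=\lambda\,(\lambda^2+A_2)$ with $A_2(c_1)=c_1(c_2-c_1)>0$; its roots are $\lambda=0$ and $\lambda=\pm i\sqrt{A_2(c_1)}$, i.e. a zero eigenvalue and a pure-imaginary pair appear at the same $R$ --- exactly the announced steady$+$oscillatory bifurcation, with oscillatory frequency governed by $\sqrt{A_2}$ at $R_{c_1}$. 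I expect this coincidence (a fold--Hopf point) to be the main subtlety: one must verify that $A_2$ stays positive there, which follows from $c_1<c_2$. The only recurring point needing care is checking that the $\bar R$ furnished by the intermediate value theorem is genuinely the \emph{first} crossing, i.e. that none of $A_1,A_2,A_3$ vanishes on $]0,\bar R[$; this is immediate because $\bar R<R_C=\min\{c_1,c_2,c_3\}$.
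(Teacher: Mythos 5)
Your proposal is correct and follows essentially the same route as the paper's proof: linear stability at $R=0$ via Lienard--Chipart with $A_0(0)=\eta\varepsilon+\gamma\delta+\eta\gamma(\eta+\gamma)>0$, a case analysis on the ordering of $c_1,c_2,c_3$ (the ICP ranking), an intermediate-value argument locating the lowest root $\bar R\in\,]0,R_{C}[\,$ of $A_1A_2-A_3=0$ with $\varphi^2=A_2(\bar R)=A_3(\bar R)/A_1(\bar R)$, and the factorization $\lambda(\lambda^2+A_2)$ with $A_2(R_{c_1})=c_1(c_2-c_1)>0$ in case (\ref{ii}). The only (harmless) divergence is in the degenerate case (\ref{i*}), where the paper exhibits instability at $R_{C_1}=R_{C_2}$ by explicitly solving $\lambda^3+A_3=0$, whereas you reuse the sign-change argument $A_0(c_1)=-A_3(c_1)<0$ --- a slightly more uniform variant of the same idea.
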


\begin{proof} When  $  R=-\Gamma =0,  $ it results: 
\[    \displaystyle A_{0}]_{\Gamma=0} = A_1 A_2-A_3]_{\Gamma=0}=
( \eta\,+\gamma\,)( \varepsilon \,   + \delta  +\eta \gamma )- (  \gamma \varepsilon\,+\delta \eta) =  \eta\,\varepsilon  +\gamma\, \delta+ ( \eta\,+\gamma\,) \eta \gamma >0 \]

\noindent with $ A_k  >0\, (k=1,2,3) $. So, the critical point  is linearly, asymptotically stable for $ R=0. $  


\noindent Besides, when  inequalities (\ref{i}) hold, it means that  \begin{center}
$  R_{C_1} < R_{C_2};  \qquad   R_{C_1} < R_{C_3};  $ 
\end{center}


\noindent i.e.  $A_{1}$ is  the spectrum equation coefficient with the biggest instability coefficient power,  so that  at $  R= R_{C_1}, $  it results:

 \begin{center}
 $ A_1=0, \quad  A_3 = \gamma \eta \,( -c_1 + c_3) >0;   \qquad A_0=A_1 A_2-A_3 <0  $
 \end{center}
 

\noindent and hence, in view of the continuity of $ A_1 A_2-A_3, $ there exists a $  \bar  R  \in ]0, R_{C_1}[ $  such that

  \noindent 
  \begin{center}
 $ A_1 (\bar R) A_2(\bar R) -A_3  (\bar R)=0,  $
 \end{center}

 \vspace{3mm} 
 \noindent being $ \bar R  $   the lowest root of $ A_1 A_2 = A_3$   in $ ]0, R_{C_1}[ $ and it results

\vspace{3mm} \noindent 
\begin{equation} \label{d}
\displaystyle P(i \varphi, \bar R)=0\Leftrightarrow \displaystyle  [\lambda^3+A_1 \lambda ^2 + A_2  \lambda + A_3]_{i\varphi } =0
\end{equation}  


\noindent  and hence 

\begin{equation} \label{d}
\displaystyle  -  i\varphi^3-A_1(\bar R) \, \varphi ^2 + i \,A_2(\bar R) \,\varphi+ A_3(\bar R) =0 
\end{equation}  


\noindent with

 \begin{equation} \label{aa}
\displaystyle  \varphi^2 = \frac{A_3(\bar R)}{ A_1(\bar R)} = A_2 (\bar R).
\end{equation}

\vspace{3mm} \noindent
 Besides, conditions  (\ref{i*})  imply that  $  R=R_{C_1} = R_{C_2} < R_{C_3} $ that means $ A_1=A_2=0$ and

\noindent   \begin{center}
$ A_3(R_{C_1})= -\gamma \,\eta \, c_1+  c_3 \,\,\gamma \,\eta \,  >0 $ 
\end{center}
 

 \noindent Consequently, the spectrum equation is reduced to:

\vspace{3mm} 

\noindent  
\begin{equation}
\lambda^3 + A_3 = (\lambda + A_3^{1/3}) \,( \lambda ^2 -  \lambda\, A_3^{1/3}  + A_3^{2/3})=0 
\end{equation}


\noindent  and hence 
 
 \[ \lambda_1= - A_3^{1/3} \quad \lambda_{2,3}= ( 1\pm \,i\,\sqrt{3}
)  A_3^{1/3}/2. \, \]

\vspace{3mm} \noindent This means that a simple oscillatory bifurcation occurs at a $ \bar R \in ]0, R_{c_1}= R_{c_2}[. $

\vspace{3mm} \noindent Instead, when (\ref{ii}) holds,  $ R_{c_1}= R_{c_3}<R_{c_2} $ ; and hence one obtains $ A_1= A_3=0.  $ So, from the spectrum equation it results:

 \vspace{3mm} 
 
 \noindent 
\begin{equation}
P(i\varphi)=0\Leftrightarrow [\lambda ( \lambda ^2 +   A_2)]_{i\varphi}=0 \Leftrightarrow \lambda=0 ; \varphi= (\sqrt{A_2})_{R_{c_1}} =    \sqrt{c_1   \,(c_2 - c_1 ) } 
\end{equation}

 \vspace{3mm} 
 \noindent and a steady ($ \lambda=0 $) $ + $ oscillatory bifurcation  of frequency $ \varphi/\pi $ with $ \varphi= (\sqrt{A_2})_{R_{c_1}} $  occurs.

 
 \noindent  
Analogous results can be obtained  if we suppose $ R_{c_2} $  to be the biggest ICP and hence (\ref{iv}) is proved, too.
\end{proof}

\section { Hopf bifurcations driven by $ -\eta = - \varepsilon \beta  >0 $ }

The previous bifurcation criterion required that $ \Gamma \leq 0. $ In the present section, we prove  that, by  choosing  
   $ \eta= \varepsilon \beta $ as bifurcating parameter and letting  $ \eta \leq 0, $ the Hopf bifurcation can arise with  $\Gamma \geq 0.$

Indeed,  the following theorem states:

\begin{theorem}
Let consider a critical point $ \bar C $ such that:

  \begin{equation} \label{asdddn}
     \displaystyle  \bar U\leq -\sqrt{a^2+a+1}\, +a+1 \,\,\,\mbox{or}  \,\,\, \bar U\geq \sqrt{a^2+a+1}\, +a+1
\end{equation}  

\vspace{3mm}\noindent and  let  constants $ (\varepsilon,\, \delta,\, d ),  $  be positive. Assuming $ R = - \eta= - \varepsilon \,\beta  >0, $  then, at the growing of $ R  $ from $ R=0, $  conditions

\begin{equation}\label{in}
\displaystyle \Gamma + \gamma  \leq  \frac{ \gamma  \Gamma +\varepsilon \,+  \delta     }{\Gamma\,+\gamma }  ; \qquad  \Gamma + \gamma <\frac{\gamma \varepsilon}{ \gamma \Gamma +\delta}
\end{equation}

\vspace{3mm}\noindent  ensure that a\emph{ simple  oscillatory bifurcation} occurs at a $ \bar R \in ]0, R_{C_1}[ ,$ while if

  \begin{equation} \label{iin}
\Gamma + \gamma  = \frac{\gamma \varepsilon}{ \gamma \Gamma +\delta} <  \frac{ \gamma \Gamma +\varepsilon \,+  \delta     }{\Gamma\,+\gamma } 
\end{equation}

\vspace{3mm} \noindent a steady+oscillatory bifurcation appears.

\vspace{3mm}\noindent Moreover, if 

\begin{equation}\label{ivn}
 \displaystyle  \frac{ \gamma \Gamma +\varepsilon \,+  \delta     }{\Gamma\,+\gamma }< \eta + \Gamma < \frac{\gamma \varepsilon}{ \gamma \Gamma +\delta},
 \end{equation}

 \vspace{3mm}\noindent  a{ simple  oscillatory bifurcation} occurs at a $ \bar R \in ]0, R_{C_2}[.$
 
\end{theorem}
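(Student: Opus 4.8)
The plan is to reproduce the architecture of the proof of Theorem~\ref{th55}, now taking $R=-\eta$ as the bifurcation parameter while $\Gamma\geq 0$ is held fixed by hypothesis~(\ref{asdddn}). First I would substitute $\eta=-R$ into the invariants~(\ref{31bbbx}), obtaining the affine expressions
\[
A_1=-R+(\Gamma+\gamma),\quad A_2=-(\Gamma+\gamma)\,R+(\gamma\Gamma+\varepsilon+\delta),\quad A_3=-(\gamma\Gamma+\delta)\,R+\gamma\varepsilon .
\]
Since $\Gamma\geq 0$ and $\gamma,\varepsilon,\delta>0$, each $A_k$ is strictly decreasing in $R$ and strictly positive at $R=0$, so each has a single positive root, namely $R_{c_1}=\Gamma+\gamma$, $R_{c_2}=(\gamma\Gamma+\varepsilon+\delta)/(\Gamma+\gamma)$ and $R_{c_3}=\gamma\varepsilon/(\gamma\Gamma+\delta)$.

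Next I would verify the stability at $R=0$ needed to apply the instability coefficient power (ICP) criterion. A direct computation gives $A_1,A_2,A_3>0$ together with
\[
A_0=A_1A_2-A_3=\gamma\Gamma^2+(\varepsilon+\delta+\gamma^2)\,\Gamma+\gamma\delta>0,
\]
so the Lienard--Chipart conditions~(\ref{aaa}) of Theorem~\ref{thaa} hold and $\bar C$ is linearly asymptotically stable at $R=0$.

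The core is then a case analysis governed by which coefficient carries the biggest ICP, i.e.\ the smallest critical value $R_{c_k}$. Under~(\ref{in}) one has $R_{c_1}\leq R_{c_2}$ and $R_{c_1}<R_{c_3}$, so $A_1$ has the biggest ICP; at $R=R_{c_1}$ then $A_1=0$ and $A_3>0$, whence $A_0=-A_3<0$. As $A_0(0)>0$, continuity yields a lowest $\bar R\in\,]0,R_{c_1}[$ with $A_0(\bar R)=0$; inserting $\lambda=i\varphi$ into $\lambda^3+A_1\lambda^2+A_2\lambda+A_3=0$ and separating real and imaginary parts produces the purely imaginary pair with $\varphi^2=A_3(\bar R)/A_1(\bar R)=A_2(\bar R)$, that is, a simple oscillatory bifurcation. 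Under~(\ref{iin}) one has $R_{c_1}=R_{c_3}<R_{c_2}$, so $A_1$ and $A_3$ vanish together and the spectrum equation collapses to $\lambda(\lambda^2+A_2)=0$, giving the zero root with $\pm i\sqrt{A_2}$, i.e.\ a steady$+$oscillatory bifurcation. Under~(\ref{ivn}), with $R_{c_2}$ the smallest, the same continuity argument on $]0,R_{c_2}[$ (where $A_2=0$ while $A_3>0$, so $A_0<0$) delivers the oscillatory bifurcation.

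The delicate point is the sign step in the continuity argument: I must guarantee that at the relevant critical value the term $A_3$ (or the product $A_1A_2$) is strictly positive, so that $A_0$ truly changes sign \emph{inside} the open interval and not merely at its endpoint. The strict orderings in~(\ref{in})--(\ref{ivn}) secure this in general; the one subtlety is the boundary case $R_{c_1}=R_{c_2}$ admitted by the weak inequality in~(\ref{in}), which I would settle by noting that there $A_0(R_{c_1})=-A_3(R_{c_1})<0$ still holds, so the sign change persists and the oscillatory bifurcation indeed occurs strictly before $R_{c_1}$.
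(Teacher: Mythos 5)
Your proposal is correct and takes essentially the same route as the paper: the paper's own proof just checks linear stability at $R=0$ (computing $A_k>0$ and $A_0]_{\eta=0}>0$), identifies $c_1=\Gamma+\gamma$, $c_2=(\gamma\Gamma+\varepsilon+\delta)/(\Gamma+\gamma)$, $c_3=\gamma\varepsilon/(\gamma\Gamma+\delta)$ as the roots of $A_k=0$, and then concludes ``by retracing the analysis set forth in the previous bifurcation cases,'' which is exactly the case analysis you carry out in full. Your only deviation is harmless and arguably cleaner: for the boundary case $R_{c_1}=R_{c_2}$ admitted by the weak inequality in (\ref{in}), you reuse the continuity argument via $A_0(R_{c_1})=-A_3(R_{c_1})<0$, whereas the paper's template (in the proof of Theorem \ref{th55}) treats it separately through the factorization $\lambda^3+A_3=0$.
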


\begin{proof}Condition (\ref{asdddn}) ensures that $ \Gamma, $ defined in (\ref{uu}), is positive.  Moreover,   since (\ref{31bbbx}), when  $ \eta =0, $ it results  $ A_k>0 \,\, (k=1,2,3) $ and 

\[ \displaystyle   A_0]_{\eta=0}  =   \displaystyle  A_1 A_2-A_3]_{\eta=0} = ( \Gamma \,+\gamma\,)( \Gamma \,\gamma +\varepsilon \,   + \delta )-   \gamma \varepsilon\, >0.  \]

\vspace{3mm}\noindent So,  the  critical point $ \bar C= (\bar U, \bar W, \bar Y) $  is linearly, asymptotically  stable for $ R = \bar R =0 $. In addition, denoting by 

 \begin{equation}
\label{111poi}
   \displaystyle  c_1 = \Gamma  + \gamma; \quad c_2  =\,\, \frac{ \gamma \Gamma +\varepsilon \,+  \delta     }{\Gamma\,+\gamma };  \quad   c_3 = \frac{\gamma \varepsilon}{ \gamma \Gamma +\delta}
  \end{equation}

 \vspace{3mm}\noindent from (\ref{31bbbx}) it results

\begin{equation}
\label{x}
      \displaystyle A_{1} = 0 \Leftrightarrow R = c_1 ;
     \,\quad  A_{2} =0 \Leftrightarrow  \,R =c_2;   \quad A_{3} = 0 \Leftrightarrow R =  c_3, 
\end{equation}

 \vspace{3mm}\noindent and so, by retracing the analysis set forth in the previous bifurcation cases, this theorem can also be proved.  
\end{proof}

\section{Remarks and discussion}
 As it  is well known, the phenomenon related to  Hopf bifurcations is of great importance and  it is widely studied.
In this paper, the   FHR  model (\ref{22}) considered  also depends on the variable $a$ generally not present in the bifurcations studies and it generalizes the FHR  system (\ref{21p}), which, on the contrary, is more often considered in literature.

Moreover, the results obtained [see, f.i. Theorems \ref{th2}-\ref{th55} and condition (\ref{asd1})] do not require any assumptions for the real variable $a$ and this implies that the analysis can certainly be directed to a wider set of physical cases.

Furthermore, the equivalence that such a mathematical model creates between biological problems and  superconducting processes of Josephson junctions or viscoelasticity,
 suggests that the analysis of such models is reflected in a large number of realistic mathematical models.

In this paper the onset of Hopf bifurcations,  driven by specific parameters, is considered. In particular   an analysis  on the onset of steady and  oscillatory bifurcations has been performed  driven by both an admissible  critical point $ \bar U  $  and a coefficient characterized the FHR system.

Looking forward, in order to  obtain a more comprehensive view of the stability and instability of critical points, the analysis can be extended to evaluate Hopf bifurcations driven by all other coefficients that characterize the FHR system.  Moreover, it will be possible to determine explicit critical points at particular values of the FHR system variables and also evaluate the explicit value of the bifurcation parameters $R.$



\vspace{6mm}

\textbf{{Acknowledgments}}

This paper has been performed under the auspices of the National Group of Mathematical Physics GNFM-INdAM.The author is grateful to the anonymous reviewers for their comments and suggestions.

\end{document}